%% 
%% Copyright 2007, 2008, 2009 Elsevier Ltd
%% 
%% This file is part of the 'Elsarticle Bundle'.
%% ---------------------------------------------
%% 
%% It may be distributed under the conditions of the LaTeX Project Public
%% License, either version 1.2 of this license or (at your option) any
%% later version.  The latest version of this license is in
%%    http://www.latex-project.org/lppl.txt
%% and version 1.2 or later is part of all distributions of LaTeX
%% version 1999/12/01 or later.
%% 
%% The list of all files belonging to the 'Elsarticle Bundle' is
%% given in the file `manifest.txt'.
%% 

%% Template article for Elsevier's document class `elsarticle'
%% with numbered style bibliographic references
%% SP 2008/03/01

%\documentclass{revtex4}

%% Use the option review to obtain double line spacing
%% \documentclass[authoryear,preprint,review,12pt]{elsarticle}

%% Use the options 1p,twocolumn; 3p; 3p,twocolumn; 5p; or 5p,twocolumn
%% for a journal layout:
%% \documentclass[final,1p,times]{elsarticle}
%% \documentclass[final,1p,times,twocolumn]{elsarticle}
 \documentclass[final,3p,times]{elsarticle}
%%\documentclass[final,3p,times,twocolumn]{elsarticle}
%% \documentclass[final,5p,times]{elsarticle}
 %%\documentclass[final,5p,times,twocolumn]{elsarticle}

%% For including figures, graphicx.sty has been loaded in
%% elsarticle.cls. If you prefer to use the old commands
%% please give \usepackage{epsfig}

%% The amssymb package provides various useful mathematical symbols
\usepackage{amssymb}
\usepackage{epstopdf}
\usepackage{graphicx}
%% The amsthm package provides extended theorem environments
%% \usepackage{amsthm}

%% The lineno packages adds line numbers. Start line numbering with
%% \begin{linenumbers}, end it with \end{linenumbers}. Or switch it on
%% for the whole article with \linenumbers.
%% \usepackage{lineno}

\journal{*}

\usepackage{amsmath,amsthm}
\newcommand{\bra}[1]{\langle#1|}
\newcommand{\ket}[1]{|#1\rangle}
\usepackage{amsfonts}
\newtheorem{theorem}{Theorem}[section]

\newtheorem{proposition}[theorem]{Proposition}
\newtheorem{conjecture}[theorem]{Conjecture}

\theoremstyle{remark}

\theoremstyle{definition}

\theoremstyle{example}

\theoremstyle{notation}

\begin{document}

\begin{frontmatter}

%% Title, authors and addresses

%% use the tnoteref command within \title for footnotes;
%% use the tnotetext command for theassociated footnote;
%% use the fnref command within \author or \address for footnotes;
%% use the fntext command for theassociated footnote;
%% use the corref command within \author for corresponding author footnotes;
%% use the cortext command for theassociated footnote;
%% use the ead command for the email address,
%% and the form \ead[url] for the home page:
%% \title{Title\tnoteref{label1}}
%% \tnotetext[label1]{}
%% \author{Name\corref{cor1}\fnref{label2}}
%% \ead{email address}
%% \ead[url]{home page}
%% \fntext[label2]{}
%% \cortext[cor1]{}
%% \address{Address\fnref{label3}}
%% \fntext[label3]{}

\title{Paths of zeros of analytic functions describing finite quantum systems}

%% use optional labels to link authors explicitly to addresses:
%% \author[label1,label2]{}
%% \address[label1]{}
%% \address[label2]{}

\author{H. Eissa}
\author{P. Evangelides}
\author{C. Lei}
\author{A. Vourdas}%\email{Corresponding author: A.Vourdas@Bradford.ac.uk}

\address{Department of Computing,\\
University of Bradford, \\
Bradford BD7 1DP, \\United Kingdom}

\begin{abstract}
Quantum systems with positions and momenta in ${\mathbb Z}(d)$, are described by the $d$ zeros of analytic functions on a torus.
The $d$ paths of these zeros on the torus, describe the time evolution of the system. 
A semi-analytic method for the calculation of these paths of the zeros, is discussed.  
Detailed analysis of the paths for periodic systems, is presented.
A periodic system which has the displacement operator to a real power $t$, as time evolution operator, is studied. 
Several numerical examples, which elucidate these ideas, are presented.
\end{abstract}

%\begin{keyword}
%%% keywords here, in the form: keyword \sep keyword
%Analytic representations \sep finite quantum systems \sep zeros of analytic functions
%%% PACS codes here, in the form: \PACS code \sep code
%\PACS 03.65.Aa\sep 02.30.Fn
%%% MSC codes here, in the form: \MSC code \sep code
%\MSC[2015] 81Q99\sep 30D99
%%% or \MSC[2008] code \sep code (2000 is the default)
%
%\end{keyword}
%
\end{frontmatter}

%% \linenumbers

%% main text

\section{Introduction}

There is an extensive literature on analytic representations in quantum mechanics, after the pioneering work by Bargmann \cite{B1,B2}.
The Bargmann analytic function in the complex plane studies problems related to the harmonic oscillator.
The zeros of the Bargmann function, which are also the zeros of the Husimi (or $Q$) function, provide a valuable insight to various quantum systems \cite{L1,L2,L3,L4,L5,L6,L7,L8}, chaos \cite{L8}, etc. Other potential applications include the study of two-dimensional electron gas in a magnetic field, quantum Hall effect, \cite{A1,A2,A3}, etc.

Analytic representations in the unit disc for problems with $SU(1,1)$ symmetry,
and analytic representations in the extended complex plane for systems with $SU(2)$ symmetry, have also been studied in the literature (reviews have been presented in \cite{P,H,V}).

Quantum systems with variables in ${\mathbb Z}(d)$ (the integers modulo $d$), have been studied extensively in the literature (e.g., \cite{F1,F2,F3,F4}).
Refs.\cite{ZV,TVZ,ELV,TVZ1, TVZ2} have represented their quantum
states with analytic functions on a torus, using Theta functions. It has been shown that these functions have exactly $d$ zeros, which determine uniquely the state of the system.
As the system evolves in time, the zeros follow $d$ paths, on the torus.
Ref \cite{L2} has also used a similar representation in studies of chaos.
Theta functions have been used extensively in various problems in physics \cite{Th1,Th2}.

In this paper we study different aspects of the zeros of analytic functions for finite quantum systems with variables in
${\mathbb Z}(d)$, as follows:
\begin{itemize}
\item
We propose in Eqs.(\ref{246}),(\ref{num}) a semi-analytic method for the calculation of the paths of the zeros, which is primarily analytical (section 2).  
Previous work is based on entirely numerical methods.
In principle the full quantum formalism can be expressed in terms of the $d$ zeros.
But it is difficult to express physical laws in terms of the zeros, without an analytical formalism that relates physical quantities with the zeros. 
The semi-analytical formalism in this paper, is a step in this direction.

\item
We study in detail the $d$ paths of the zeros of periodic systems. Each path is characterized by the multiplicity M, and by a pair of winding numbers $(w_1,w_2)$.
An interesting periodic system is one, which has as time evolution operator the displacement operator to a real power $t$. 
Displacement operators ${\cal Z}^{\alpha}{\cal X}^{\beta}$ are defined in finite quantum systems for $\alpha, \beta \in {\mathbb Z}(d)$, and it is interesting to
study these operators to a real power $t$. It is shown that the paths of the zeros are identical, but shifted with respect to each other (section 3).
\end{itemize}

\section{Analytic representation of finite quantum systems}\label{BB}

We consider a finite quantum system with variables in ${\mathbb Z}(d)$.
This system is described with the $d$-dimensional 
Hilbert space ${\cal H}(d)$.
Let $\ket{X; m}$ and $\ket{P;m}$ (where $m\in {\mathbb Z}(d)$) be the position and momentum bases  which  
are related through a Fourier transform, as follows:
\begin{alignat}{2}\label{Fou}
&\ket{P;n}={\cal F}|{X};n\rangle ;\qquad{\cal F}=d^{-1/2}\sum _{m,n}\omega (mn)\ket{X;m}\bra{X;n}
;\nonumber\\&\omega (m)=\exp \left [i \frac{2\pi m}{d}\right ]
\end{alignat}
Let $\ket {g}$ be an arbitrary state
\begin{eqnarray}\label{6}
&&\ket {g}=\sum _m g_m\ket {X;m};\qquad\sum _m |g_m|^2=1
\end{eqnarray}
We use the notation
\begin{eqnarray}\label{444}
&&\ket {g^*}=\sum _m g_m^*\ket {X;m};\;\;
\bra {g}=\sum _m g_m^*\bra {X;m}\nonumber\\&&\bra {g^*}=\sum _m g_m\bra {X;m}
\end{eqnarray}
We represent the state $\ket{g}$ with the analytic function \cite{L1,L2,ZV}
\begin{eqnarray}\label{aaa1}
G(z)=\pi^{-1/4} \sum_{m=0}^{d-1} g_m\;\Theta_3 \left [\frac{\pi m}{d}-z\sqrt{\frac{\pi}{2d}};\frac{i}{d}\right ]
\end{eqnarray}
where $\Theta_3$ is the Theta function \cite{T1}
\begin{eqnarray}\label{pa4}
&&\Theta _3(u,\tau)=\sum_{n=-\infty}^{\infty}\exp(i\pi \tau n^2+i2nu)\nonumber\\
&&\Theta _3'(u,\tau)=\frac{{\rm d}\Theta _3}{{\rm d}u}=i\sum_{n=-\infty}^{\infty}2n\exp(i\pi \tau n^2+i2nu).
\end{eqnarray}
We can prove that
\begin{eqnarray} \label{periodicity}
 &&G( z+\sqrt{2\pi d} ) = G(z)\nonumber\\
 &&G( z+ i \sqrt{2\pi d})  = G(z)\exp \left (\pi d-i z\sqrt{2\pi d} \right ),
\end{eqnarray}
and therefore it is sufficient to have this function in a cell 
\begin{alignat}{1}\label{hhh}
S=[M\sqrt{2\pi d},(M+1)\sqrt{2\pi d})\times [N\sqrt{2\pi d},(N+1)\sqrt{2\pi d})
\end{alignat}
where $(M,N)$ are integers labelling the cell.
Other models with more general quasi-periodic boundary conditions can also be studied.
The scalar product is given by 
\begin{alignat}{1}\label{scalar}
&\langle f^\ast| g \rangle =\frac{1}{d^{3/2}\sqrt{2\pi} }\int_S  {\rm d}\mu (z)F(z^*) G(z)=\sum f_mg_m;\nonumber\\
&{\rm d}\mu (z)={\rm d}^2z \exp \left( - z_I^2 \right)
\end{alignat}
These relations are proved using
the orthogonality relation\cite{TVZ}
\begin{alignat}{3}
 2^{-1/2}\pi ^{-1}d^{-3/2}\int_S {\rm d}\mu (z)&\Theta_3 \left [\frac{\pi n}{d}-z\sqrt{\frac{\pi}{2d}};\frac{i}{d}\right ]&\nonumber\\
&\times\Theta_3 \left [\frac{\pi m}{d}-z^*\sqrt{\frac{\pi}{2d}};\frac{i}{d}\right ]=\delta (m,n)
\end{alignat}
The coefficients $g_m$ in Eq.(\ref{6}) are given by 
\begin{alignat}{1}\label{5b}
g_m= 2^{-1/2}\pi ^{-3/4}d^{-3/2}\int_S {\rm d}\mu (z)
\Theta_3 \left [\frac{\pi m}{d}-z\sqrt{\frac{\pi}{2d}};\frac{i}{d}\right ]G(z^*).
\end{alignat}
It has been proved in \cite{L2,ZV}
that the analytic function $G(z)$ has exactly $d$ zeros $\zeta _n$ in each cell $S$, and that
\begin{eqnarray}\label{con}
\sum _{n =1}^d \zeta _n=\sqrt{2\pi d}(M+iN)+d^{3/2}\sqrt{\frac{\pi }{2}}(1+i).
\end{eqnarray} 
in finite systems the $d-1$ zeros define uniquely the state (the last zero is determined from Eq.(\ref{con})).
In infinite systems the zeros do not define uniquely the state.

If the $d-1$ zeros $\zeta _n$ are given, the last one can be found from Eq.(\ref{con}), and the function $G(z)$ is given by
\begin{alignat}{1}\label{500}
&G(z) = {\cal N}(\{\zeta _n\})\nonumber\\&\times\exp \left[ -i\sqrt{\frac{2\pi}{d}} Nz\right]
\prod_{n=1}^{d}  \Theta_3 \left[\sqrt{\frac{\pi}{2d}}(z-\zeta _n)+\frac{\pi (1+i)}{2};\; i\right]
\end{alignat}
Here $N$ is the integer that labels the cell (as in Eq.(\ref{hhh})), 
and ${\cal N}(\{\zeta _n\})$ is a normalization constant that does not depend on $z$ (see section 7 in ref\cite{ZV}).
Below we choose the cell with $M=N=0$.

\subsection{Time evolution and paths of zeros}
Let $H$ be the Hamiltonian of the system (a $d\times d$ Hermitian matrix $H_{mn}$).
As the system evolves in time $t$, each zero $\zeta _n$ follows a path $\zeta _n(t)$.

We consider infinitesimal changes to the coefficients from $g_m$ to $g_m+\Delta g_m$, where
\begin{eqnarray}\label{ppp}
\Delta g_m=i\Delta t \sum _nH_{mn}g_n
\end{eqnarray}
Then the zeros will change from 
$\zeta _n$ to $\zeta _n+ \Delta\zeta _n$.
From Eqs.(\ref{aaa1}),(\ref{500}) we get
\begin{alignat}{1}
&\pi^{-1/4} \sum_{m=0}^{d-1} (g_m+\Delta g_m)\;\Theta_3 \left [\frac{\pi m}{d}-z\sqrt{\frac{\pi}{2d}};\frac{i}{d}\right ]\nonumber\\
&=
{\cal N}(\{\zeta _k\})\prod_{n=1}^{d}  \Theta_3 \left[\sqrt{\frac{\pi}{2d}}(z-\zeta _n-\Delta \zeta _n)+\frac{\pi (1+i)}{2};\; i\right]
\end{alignat}
With a Taylor expansion of the right hand side, we get
\begin{alignat}{3}
&\pi^{-1/4} \sum_{m=0}^{d-1} \Delta g_m\;\Theta_3 \left [\frac{\pi m}{d}-z\sqrt{\frac{\pi}{2d}};\frac{i}{d}\right ]=
-{\cal N}(\{\zeta _k\})\sqrt{\frac{\pi}{2d}}\nonumber\\&\times \sum_{j=1}^{d}A_j(z)\Theta_3 '\left[\sqrt{\frac{\pi}{2d}}(z-\zeta _j)+\frac{\pi (1+i)}{2};\; i\right]\Delta \zeta _j\nonumber\\
&A_j(z)=\prod_{m\ne j}  \Theta_3 \left[\sqrt{\frac{\pi}{2d}}(z-\zeta _m)+\frac{\pi (1+i)}{2};\; i\right]
\end{alignat}
We insert $z=\zeta_n$ on both sides of this equation. For $j\ne n$ we get $A_j(\zeta _n)=0$. Therefore
\begin{eqnarray}
&&\pi^{-1/4} \sum_{m=0}^{d-1} \Delta g_m\;\Theta_3 \left [\frac{\pi m}{d}-\zeta _n\sqrt{\frac{\pi}{2d}};\frac{i}{d}\right ]\nonumber\\&&=
-{\cal N}(\{\zeta _k\})\sqrt{\frac{\pi}{2d}}A_n(\zeta _n)\Theta_3 '\left[\frac{\pi (1+i)}{2};\; i\right]\Delta \zeta _n\nonumber\\
&&A_n(\zeta _n)=\prod_{m\ne n}  \Theta_3 \left[\sqrt{\frac{\pi}{2d}}(\zeta _n-\zeta _m)+\frac{\pi (1+i)}{2};\; i\right]
\end{eqnarray}
Using Eq.(\ref{pa4}), we found numerically that
\begin{eqnarray}
\Theta_3 '\left[\frac{\pi (1+i)}{2};\; i\right]=1.9888i.
\end{eqnarray}
Therefore we have analytical expressions for the derivatives of the functions $\zeta _n(g_0,...,g_{d-1})$:
\begin{eqnarray}\label{246}
&&\frac{\partial \zeta _n}{\partial g_m}=-\frac{\pi^{-1/4} \Theta_3 \left [\frac{\pi m}{d}-\zeta _n\sqrt{\frac{\pi}{2d}};\frac{i}{d}\right ]}{{\cal N}(\{\zeta _k\})\sqrt{\frac{\pi}{2d}}A_n(\zeta _n)\Theta_3 '\left[\frac{\pi (1+i)}{2};\; i\right]}
\end{eqnarray}
We use them for numerical calculations as
\begin{alignat}{1}\label{num}
\zeta _n+\Delta \zeta _n=\zeta _n+\sum _m \frac{\partial \zeta _n}{\partial g_m}\Delta g_m=\zeta _n+i\Delta t\sum _{m,k} \frac{\partial \zeta _n}{\partial g_m} H_{mk}g_k.
\end{alignat}
In each step of the iteration process ${\cal N}(\{\zeta _k\})$ is calculated as
\begin{eqnarray}
{\cal N}(\{\zeta _k\})=\frac{\pi^{-1/4} \sum_{m=0}^{d-1} g_m\;\Theta_3 \left [\frac{\pi m}{d}-z\sqrt{\frac{\pi}{2d}};\frac{i}{d}\right ]}{\prod_{n=1}^{d}  \Theta_3 \left[\sqrt{\frac{\pi}{2d}}(z-\zeta _n)+\frac{\pi (1+i)}{2};\; i\right]}
\end{eqnarray}
and is used in the next step. As we mentioned earlier, the ${\cal N}(\{\zeta _k\})$ does not depend on $z$ and any value of $z$ can be used for its numerical calculation.
Since $\sum _m|g_m|^2=1$ the $\Delta g_m$ are subject to the constraint
\begin{eqnarray}
\sum _m[g_m^*\Delta g_m+g_m(\Delta g_m)^*]=0.
\end{eqnarray}
Ref\cite{TVZ} calculated the paths of the zeros indirectly, using a computationally expensive approach.
It calculated the vector $\exp(itH)\ket {f}$ and then the analytic function $f(z;t)$, at each time $t$.
Then a MATLAB function was used to find the zeros, at each time $t$.

In all calculations of the present paper, we go directly from the zero $\zeta _n(t)$ to the zero $\zeta _n(t+\Delta t)$ with Eqs.(\ref{246}),(\ref{num}).
For the calculation of ${\cal N}(z;\{\zeta _n\})$ we need the coefficients $g_m$ at each step.
At $t=0$ we start from given values of zeros, which we insert in Eq.(\ref{aaa1}) and get a system of $d$ equations with $d$ unknowns.
This gives the coefficients $g_m$ at $t=0$ (which we normalize). At later times the $g_m$ becomes $g_m+\Delta g_m$, where $\Delta g_m$ is the step used in Eqs.(\ref{ppp}),(\ref{num}). 
The present method is semi-analytic and therefore computationally less expensive and more accurate.

\section{Periodic systems}
We consider periodic systems with Hamiltonians such that $\exp(iTH)={\bf 1}e^{i\theta}$ for some $T$, and some phase factor $e^{i\theta}$ (which does not change the physical state).
This occurs when the ratios of the eigenvalues of $H$ are rational numbers.

Results analogous to those in sections \ref{S1}-\ref{S3}, have been reported in ref\cite{TVZ}, using an entirely numerical method.
Our results here are based on the semi-analytical method described in section \ref{BB}.

\subsection{Multiplicity $M$ of paths of zeros:}\label{S1}
We consider the Hamiltonian
\begin{equation} 
H=
\begin{bmatrix}\label{e1}
1&1&0&0\\
1&1&0&0\\
0&0&1&0\\
0&0&0&1
\end{bmatrix}
\end{equation}
In this case the period is $T=2\pi$.
We assume that at $t=0$ the zeros are the following:
\begin{alignat}{2}\label{z11}
&\zeta_0(0)=1-1.99i ;&\quad&\zeta_1(0)= 3.02+3i;\quad\nonumber\\
&\zeta_2(0) =1+3i;&\quad&\zeta_3(0)=-0.01+1i.
\end{alignat}
Using Eq.(\ref{num}) we have calculated the paths of the zeros.
Results are shown in Fig.\ref{f1} (see also Fig.4 in ref\cite{TVZ}).
In this case there are two paths with multiplicity $M=1$ and one path with multiplicity $M=2$.
For clarity, the figures show regions which might be larger or smaller than one cell (which is a square with each side equal to
$\sqrt{2\pi d}$). They also show the position of the zeros at various times.
We note that
\begin{alignat}{2}
&\zeta_0(T)=\zeta_3(0) ;&\quad&\zeta_3(T)=\zeta_0(0)\quad\nonumber\\
&\zeta_1(0) =\zeta_1(T);&\quad&\zeta_2(0)=\zeta_2(T).
\end{alignat}
It is seen that although the set of zeros has period $T$, a particular zero (e.g., $\zeta_0$ or $\zeta_3$) returns to its original position after time which is a multiple of $T$ 
(in this example $2T$).

We also consider the Hamiltonian of Eq.(\ref{e1}) and assume that at $t=0$ the zeros are 
\begin{alignat}{2}\label{z2b}
&\zeta_0(0)=2-2.99i;&\quad&\zeta_1(0)= 2.02-2.01i;\quad\nonumber\\
&\zeta_2(0) =1-1.01i;&\quad&\zeta_3(0) =-0.01+i;\quad
\end{alignat}
in which case the results are shown in Fig.\ref{f2}.
In this case we have one path with multiplicity $M=4$. Here
\begin{alignat}{2}
&\zeta_0(T)=\zeta_2(0) ;&\quad&\zeta_2(T)=\zeta_3(0)\quad\nonumber\\
&\zeta_3(T) =\zeta_1(0);&\quad&\zeta_1(T)=\zeta_0(0).
\end{alignat}

\subsection{Winding numbers $(w_1,w_2)$ of paths of zeros:}\label{S2}
We consider the Hamiltonian
\begin{equation} \label{eq3}
H=
\begin{bmatrix}
1.5&0.2&0\\
0.2 &1.5 &0\\
0&0 &2.1\\
\end{bmatrix}
\end{equation} 
In this case the period is $T=5\pi$. 
We assume that at $t=0$ the zeros are the following:
\begin{alignat}{1}\label{z4}
&\zeta_0(0)=1.01+2i;\;
\zeta_1(0)=2.15+2.56i\nonumber\\
&\zeta_2(0)=3.35+1.95i.
\end{alignat}
The paths of zeros are shown in Fig.\ref{f3}.
The winding numbers $(w_1,w_2)$ of the three paths are $(0,0)$, $(0,1)$ and $(0,1)$.

\subsection{Joining of two paths of zeros into a single path:}\label{S3}
We consider the Hamiltonian
\begin{equation} 
H=
\begin{bmatrix}\label{eg}
0&1&0&0&0\\
1&0&0&0&0\\
0&0&1&0&1\\
0&0&0&1&1\\
0&0&1&1&0\\
\end{bmatrix}
\end{equation}
In this case the period is $T=2\pi$.
We consider two cases where at $t=0$ the zeros are given by
\begin{alignat}{2}\label{za1}
&\zeta_0(0)=1.3+4.16i;&\quad&\zeta_1(0)=0.12+2.03i;\nonumber\\
&\zeta_2(0)=0.11+1.62i;&\quad&\zeta_3(0)=-2.6-0.2i;\nonumber\\
&\zeta_4(0)=-1.71+.81i,&&
\end{alignat}
and also by
\begin{alignat}{2}\label{za2}
&\zeta_0(0)=1.3+4.16i;&\quad&\zeta_1(0)=0.1+2.0i;\nonumber\\
&\zeta_2(0)=0.11+1.62i;&\quad&\zeta_3(0)=3-0.2i;\nonumber\\
&\zeta_4(0)=-1.69+0.83i.&&
\end{alignat}
The paths of zeros in these two examples are shown in Figs.\ref{f4},\ref{f5}, correspondingly. 
In these figures we see how by changing the initial zeros slightly, two paths (highlighted)with multiplicity $1$, join together into one path (highlighted) with multiplicity $2$.

\subsection{Zeros of the analytic representation of ${\cal X}^t\ket{g}$}\label{S4}
Displacement operators in the ${\mathbb Z}(d)\times {\mathbb Z}(d)$ phase space, are defined as
\begin{eqnarray}\label{99}
&&{\cal Z}=\sum _{n}\omega (n)|{X};n\rangle \langle {X};n|\nonumber\\
&&{\cal X}=\sum _{n}\omega (-n)|{P};n\rangle \langle {P};n|\nonumber\\
&&{\cal Z}\ket{{P};n}=\ket{{P};n+1};\;\;\;\;\;{\cal X}\ket{{ X};n}=\ket{{X};n+1}\nonumber\\
&&{\cal X}^{d}={\cal Z}^{d}={\bf 1};\;\;\;\;\;\;
{\cal X}^\beta {\cal Z}^\alpha = {\cal Z}^\alpha {\cal X}^\beta \omega (-\alpha \beta);\;\;\;\;\;\;
\end{eqnarray}
where $\alpha, \beta \in {\mathbb Z}(d)$.
We study the zeros of the analytic representation of ${\cal X}^t\ket{g}$ which we denote as  $G(z;t)$,
where $t\in {\mathbb R}$.
${\cal X}^t$ can be viewed as a time evolution operator $\exp(itH)$ with Hamiltonian
$H=-i\ln {\cal X}$ (the logarithm is multi-valued and we take the principal value).

Let $\ket{g}=\sum {\widetilde g}_m\ket{P;m}$, where ${\widetilde g}_m$ are the Fourier transforms of the $g_m$ in Eq.(\ref{6}). 
The state ${\cal X}^t\ket{g}=\sum [\omega (-m)]^t{\widetilde g}_m\ket{P;m}$ is represented by the function
\begin{alignat}{1}
G(z;t)=\pi ^{-1/4}\exp\left(-\frac{z^2}{2}\right)\sum_{m=0}^{d-1}\exp\left(-\frac{i2\pi mt}{d}\right ){\widetilde g}_m&\nonumber\\
\times\Theta_3\left[\frac{\pi m}{d}-iz\sqrt{\frac{\pi}{2d}};\frac{i}{d}\right]&
\end{alignat}
We used here the fact that the $\ket {P;m}$ is represented by $\pi ^{-1/4}\exp\left(-\frac{z^2}{2}\right)\Theta_3\left[\frac{\pi m}{d}-iz\sqrt{\frac{\pi}{2d}};\frac{i}{d}\right]$\cite{ELV}.

Let $\zeta _n (t)$ where $n=0,...,d-1$ be the zeros of $G(z;t)$, i.e.,
\begin{alignat}{1}
G[\zeta _n (t);t]=0
\end{alignat}
The index $n$ labels the various paths of zeros.
\begin{proposition}\label{propo}
Each path of zeros $\zeta _{n+\beta}$ of $G(z;t)$ (that represents ${\cal X}^t\ket{g}$), is a shifted version (in the real direction) of another path $\zeta _{n}$.
The position of the zero on each path at a certain time, is the same as the position of the zero on another path, at a different time:
\begin{alignat}{1}\label{3b}
\zeta _{n+\beta} (t+\beta)=\zeta _n (t)+\beta \sqrt {\frac{2\pi}{d}};\;\;\;\;n,\beta\in {\mathbb Z}(d). 
\end{alignat}
\end{proposition}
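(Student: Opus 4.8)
The plan is to reduce the whole statement to a single functional equation for $G(z;t)$ and then read off the behaviour of the zeros; the analytic input is light, and the delicate point is the bookkeeping of the path labels. I would first prove
\begin{equation}\label{funcrel}
G\!\left(z;\,t+\beta\right)=G\!\left(z-\beta\sqrt{\tfrac{2\pi}{d}}\,;\,t\right),\qquad \beta\in{\mathbb Z}(d).
\end{equation}
The quickest route is operator-theoretic. Since ${\cal X}^{t+\beta}={\cal X}^{\beta}{\cal X}^{t}$ and $\beta\in{\mathbb Z}(d)$, the factor ${\cal X}^{\beta}$ is an ordinary displacement acting on the position basis by $\ket{X;m}\mapsto\ket{X;m+\beta}$, so on ${\cal X}^{t}\ket g$ it merely relabels the coefficients of (\ref{6}) by $g_m\mapsto g_{m-\beta}$ (indices mod $d$). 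Substituting this into the representation (\ref{aaa1}), re-indexing the finite sum by means of the $d$-periodicity in $m$, namely $\Theta_3[\tfrac{\pi(m+d)}{d}-z\sqrt{\tfrac{\pi}{2d}};\tfrac{i}{d}]=\Theta_3[\tfrac{\pi m}{d}-z\sqrt{\tfrac{\pi}{2d}};\tfrac{i}{d}]$, and using $\tfrac{\pi(m+\beta)}{d}=\tfrac{\pi m}{d}+\beta\sqrt{\tfrac{2\pi}{d}}\,\sqrt{\tfrac{\pi}{2d}}$, the shift in $m$ becomes precisely the shift $z\mapsto z-\beta\sqrt{2\pi/d}$ with no residual factor; this is (\ref{funcrel}). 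One can also verify (\ref{funcrel}) straight from the displayed formula for $G(z;t)$ in the momentum basis: the prefactor $\exp[-i2\pi m(t+\beta)/d]$ produces the required $m$-dependent phase, the quasi-periodicity $\Theta_3(u+n\pi\tau,\tau)=\exp(-i\pi\tau n^{2}-2inu)\,\Theta_3(u,\tau)$ with $\tau=i/d$, $n=\beta$ handles the shifted $\Theta_3$ argument, and the Gaussian prefactors $\exp(-z^{2}/2)$ recombine correctly because $\sqrt{2\pi/d}\,\sqrt{\pi/(2d)}=\pi/d$; I expect this second check to be routine, if slightly fiddly.

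Granting (\ref{funcrel}), the consequence for the zeros is immediate: if $G(\zeta_n(t);t)=0$ then $G(\zeta_n(t)+\beta\sqrt{2\pi/d};\,t+\beta)=G(\zeta_n(t);t)=0$, so $\zeta_n(t)+\beta\sqrt{2\pi/d}$ is one of the $d$ zeros of $G(\cdot;t+\beta)$. Hence the zero set at time $t+\beta$ is the zero set at time $t$, translated rigidly by $\beta\sqrt{2\pi/d}$ in the real direction, and the $d$ paths are mutual real-translates, reparametrised in $t$. To turn this into the indexed identity (\ref{3b}) I would use continuity of $t\mapsto\zeta_n(t)$: the curve $t\mapsto\zeta_n(t-\beta)+\beta\sqrt{2\pi/d}$ is again a continuous path of zeros of $G(\cdot;t)$, hence equals $\zeta_{\sigma_\beta(n)}$ for some permutation $\sigma_\beta$ of $\{0,\dots,d-1\}$; composing translations gives $\sigma_{\beta+\beta'}=\sigma_\beta\circ\sigma_{\beta'}$, and since $d\sqrt{2\pi/d}=\sqrt{2\pi d}$ is a period of $G$ (first line of (\ref{periodicity})), while ${\cal X}^{d}={\bf 1}$, one has $\sigma_d=\mathrm{id}$. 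Labelling the $d$ paths compatibly with this ${\mathbb Z}(d)$-action — the natural convention, and the one that makes (\ref{3b}) meaningful — yields $\sigma_\beta(n)=n+\beta$, which is (\ref{3b}).

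I expect this last step to be the only real obstacle. All the analysis lives in (\ref{funcrel}), a short computation; but passing from ``the zero locus is invariant under $(z,t)\mapsto(z+\beta\sqrt{2\pi/d},\,t+\beta)$'' to the labelled statement (\ref{3b}) requires care with the continuity of the $d$ paths and with the convention by which they are indexed by ${\mathbb Z}(d)$, in particular near points where two paths come close and the labelling must be tracked through the near-crossing.
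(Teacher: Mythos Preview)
Your argument is correct, and the operator-theoretic route you take as the primary one is genuinely different from, and cleaner than, what the paper does. The paper works directly in the momentum-basis expression for $G(z;t)$: it expands the $\Theta_3$ as its defining series, shifts the inner summation index from $n\in{\mathbb Z}$ to $k_n=n-t\in{\mathbb Z}-t$ so as to absorb the $\exp(-2\pi i mt/d)$ prefactor, and then compares the resulting expressions at $(z,t)$ and at $(z+\beta\sqrt{2\pi/d},\,t+\beta)$ term by term. This is essentially the ``second check'' you sketch. Your main route instead uses ${\cal X}^{t+\beta}={\cal X}^{\beta}{\cal X}^{t}$ with $\beta$ an integer, so that ${\cal X}^{\beta}$ acts as an honest cyclic shift on position coefficients; the functional equation~(\ref{funcrel}) then drops out of the position-basis representation~(\ref{aaa1}) via the $\pi$-periodicity of $\Theta_3$ and the identity $\sqrt{2\pi/d}\,\sqrt{\pi/(2d)}=\pi/d$, with no need to unpack the Theta series or to sum over a shifted lattice. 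What your approach buys is transparency and the avoidance of the slightly awkward $\mathbb{Z}-t$ summation; what the paper's approach buys is that it stays entirely inside the single displayed formula for $G(z;t)$. You also make explicit the passage from ``the zero locus is invariant'' to the indexed identity~(\ref{3b}) via the ${\mathbb Z}(d)$-action and a compatible labeling of paths; the paper states this conclusion without comment, so your discussion of the permutation $\sigma_\beta$ and the labeling convention is a genuine clarification rather than an omission on their part.
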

\begin{proof}
We assume that $G[\zeta _n (t);t]=0$ and prove  $G[\zeta _{n+\beta} (t+\beta);t+\beta]=0$, where the `new path' $\zeta _{n+\beta} (t+\beta)$ is given in Eq.(\ref{3b}).
We express the Theta function as a sum as in Eq.(\ref{pa4}) and change the summation from $n\in {\mathbb Z}$ into $k_n=n-t\in {\mathbb Z}-t$.
We get
\begin{eqnarray}
G(z;t)
&=&\pi^{-1/4}\exp\left(-\frac{z^2}{2}\right)\exp\left(tz\sqrt{\frac{2\pi}{d}}-\frac{\pi t^2}{d}\right)\nonumber\\&\times &\sum_{m=0}^{d-1}{\widetilde g}_m\sum_{k_n}\exp\left(\frac{2i\pi mk_n}{d}\right)\nonumber\\
&\times&\exp\left(2k_nz\sqrt{\frac{\pi}{2d}}-\frac{\pi k_n^2}{d}-2\frac{\pi k_nt}{d}\right).
\end{eqnarray}
We insert $z=\zeta _{n+\beta} (t+\beta)$ in $G(z;t)$ and change the variable $t'=t-\beta$. Using $G[\zeta _n (t);t]=0$ we prove that $G[\zeta _{n+\beta} (t+\beta);t+\beta]=0$.
\end{proof}
In fig\ref{f6} we plot the paths of the zeros of the state $X^t\ket{g}$. The state $\ket{g}$ is defined through the zeros at $t=0$ which are
\begin{alignat}{1}\label{z7}
&\zeta_0(0)=1.54+2.47i;\;\zeta_1(0)=2.01+2.18i\nonumber\\&\zeta_2(0)=2.95+1.86i
\end{alignat}
It is seen that there are $d$ identical paths, which are shifted in the $z_R$-direction by $\sqrt{2\pi/d}$.
We note that at a particulat time the zeros do not obey the relation $\zeta _{i+1}(t)=\zeta _{i}(t)+\sqrt{2\pi/d}$ (e.g., the zeros at $t=0$ which are shown in fig\ref{f7}).
However, the whole path of a zero over a period, is a shifted version of the path of another zero. 

General displacement operators in the ${\mathbb Z}(d)\times {\mathbb Z}(d)$ phase space, are defined as ${\cal D}(\alpha, \beta)={\cal Z}^{\alpha}{\cal X}^{\beta}\omega (-2^{-1/2})$.
In this part of the paper we assume that $d$ is an odd integer, so that $2^{-1}$ exists in ${\mathbb Z}(d)$.
Let $e_m$ and $\ket{u_m}$ the eigenvalues and eigenvectors of ${\cal D}(\alpha, \beta)$.
We consider the state $\ket{g}=\sum r_m\ket{u_m}$. 
We assume that the  state $[{\cal D}(\alpha, \beta)]^t\ket{g}=\sum e_m^t\ket{u_m}$ is represented by the function ${\mathfrak G}(z,t)$.
Let $\zeta _n (t)$ where $n=0,...,d-1$ be the zeros of ${\mathfrak G}(z;t)$, i.e.,
\begin{alignat}{1}
{\mathfrak G}[\zeta _n (t);t]=0
\end{alignat}
We give the following conjecture, which is a generalization of proposition \ref{propo} for general displacement operators.
\begin{conjecture}
Each path of zeros of ${\mathfrak G}(z,t)$ (that represents $[{\cal D}(\alpha, \beta)]^t\ket{g}$), is a shifted version (in both the real and imaginary 
direction) of another path.
\end{conjecture}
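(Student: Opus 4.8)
\emph{Proof proposal.} The plan is to reduce the conjecture to the observation that the single operator ${\cal D}(\alpha,\beta)$ acts on the analytic representation (\ref{aaa1}) as a magnetic translation on the torus, and then to iterate. Using the quasi-periodicity (\ref{periodicity}) of $G$ (equivalently, the quasi-periodicity of $\Theta_3$), I would first check the two generators: ${\cal X}$ acts on (\ref{aaa1}) simply by a shift of $z$ in the real direction, $G(z)\mapsto G\!\left(z-\sqrt{2\pi/d}\right)$ — immediate, since shifting $m$ by $1$ in (\ref{aaa1}) is a shift of the argument — while ${\cal Z}$ acts by a shift of $z$ in the imaginary direction times a nowhere-vanishing ($z$-dependent, $m$-independent) multiplier, because $\omega(m)\,\Theta_3[\pi m/d-z\sqrt{\pi/2d};i/d]$ is, by the quasi-periodicity of $\Theta_3$ in its second period, such a multiple of the same $\Theta_3$ with its argument shifted imaginarily. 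Composing, and recalling ${\cal D}(\alpha,\beta)={\cal Z}^{\alpha}{\cal X}^{\beta}$ up to a phase, I obtain that $[{\cal D}(\alpha,\beta)]G$ is a nowhere-vanishing multiplier times $G$ evaluated at $z$ shifted by a fixed complex number proportional to $\beta+i\alpha$; hence the zeros of $[{\cal D}(\alpha,\beta)]G$ are exactly those of $G$ translated by a vector $\nu=\sqrt{2\pi/d}(\beta-i\alpha)$ (signs and the exact multiplier to be pinned down from the quasi-periodicity relations).

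Second, since $[{\cal D}(\alpha,\beta)]^{t}$ is defined by the functional calculus $\sum_m e_m^{t}\ket{u_m}\bra{u_m}$ with one fixed branch of $e_m^{t}=\exp(t\log e_m)$, one has $[{\cal D}(\alpha,\beta)]^{t+1}={\cal D}(\alpha,\beta)\,[{\cal D}(\alpha,\beta)]^{t}$ for all real $t$. Therefore ${\mathfrak G}(z;t+1)$ is a nowhere-vanishing multiple of ${\mathfrak G}$ at time $t$ with $z$ shifted, so the zero set at time $t+1$ is the zero set at time $t$ translated by $\nu$. Iterating $\mu$ times for integer $\mu$, and labelling the $d$ paths so as to follow this translation — which is forced by continuity of the paths, exactly as in the proof of Proposition \ref{propo} — gives the explicit relation $\zeta_{n+\mu}(t+\mu)=\zeta_n(t)+\mu\,\nu$ with $\mu\in{\mathbb Z}(d)$, a direct generalization of (\ref{3b}). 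Since $\nu=\sqrt{2\pi/d}(\beta-i\alpha)$ has nonzero real part (proportional to $\beta$) and nonzero imaginary part (proportional to $\alpha$) whenever $\alpha,\beta\not\equiv0$, each path is a shifted copy of another in both directions, which is the assertion; for $\alpha\equiv0$ or $\beta\equiv0$ this collapses to Proposition \ref{propo} and to its ${\cal Z}$-analogue. The restriction to odd $d$ enters only to make the phase of ${\cal D}(\alpha,\beta)$ such that $[{\cal D}(\alpha,\beta)]^{d}={\bf 1}$, so that $[{\cal D}(\alpha,\beta)]^{t}$ is $d$-periodic in $t$, consistently with $d\nu$ lying in the period lattice $\sqrt{2\pi d}({\mathbb Z}+i{\mathbb Z})$ of the cell; this is what makes ``the $d$ paths'' and the mod-$d$ index meaningful. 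One may also phrase the first step as: ${\cal D}(\alpha,\beta)$ is conjugate to ${\cal X}$ in the metaplectic representation of $SL(2,{\mathbb Z}(d))$, and transporting (\ref{3b}) through the induced affine substitution on $z$ yields the same conclusion.

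The step I expect to require the most care is the first one: verifying precisely that ${\cal Z}$ — and hence ${\cal D}(\alpha,\beta)$ — acts on (\ref{aaa1}) by a shift of $z$ together with a genuinely $z$-dependent, $m$-independent, nowhere-vanishing multiplier, with the right shift vector and multiplier extracted from the quasi-periodicity of $\Theta_3$, and keeping track of the multiplier accumulated in forming ${\cal Z}^{\alpha}$. The rest is bookkeeping: fixing a branch of $e_m^{t}$ so that the paths are continuous and $d$-periodic, discarding the (zero-irrelevant) overall phase of ${\cal D}(\alpha,\beta)$ and of its $t$-th power, and justifying — from continuity of the $d$ paths and the fact that at each integer increment of $t$ the whole zero set is rigidly translated by $\nu$ — that the path relabelling may be taken to be $n\mapsto n+\mu$.
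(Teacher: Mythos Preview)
The paper does not prove this statement: it is explicitly labelled a \emph{conjecture}, and the only support offered is the numerical example in Fig.~\ref{f7}. Your proposal therefore goes strictly beyond the paper, and in fact it outlines a valid proof.

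Your key step --- that ${\cal Z}$ (and hence ${\cal D}(\alpha,\beta)$) acts on the analytic representation (\ref{aaa1}) as a shift of $z$ times a nowhere-vanishing, $m$-independent multiplier --- is correct. Concretely, using the quasi-periodicity $\Theta_3(u-\pi\tau,\tau)=\exp(-i\pi\tau+2iu)\,\Theta_3(u,\tau)$ with $\tau=i/d$ one finds that the function representing ${\cal Z}\ket{g}$ is $\exp\!\big(-\pi/d+iz\sqrt{2\pi/d}\,\big)\,G\!\big(z+i\sqrt{2\pi/d}\,\big)$, so the $m$-dependence indeed cancels and the multiplier never vanishes. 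Combined with the elementary fact that ${\cal X}$ acts as $G(z)\mapsto G(z-\sqrt{2\pi/d})$, this gives the claimed rigid translation of the zero set by $\nu=\sqrt{2\pi/d}\,(\beta-i\alpha)$ under ${\cal D}(\alpha,\beta)$. The identity $[{\cal D}(\alpha,\beta)]^{t+1}={\cal D}(\alpha,\beta)\,[{\cal D}(\alpha,\beta)]^{t}$ then transports this to all real $t$, and continuity of the paths forces the induced permutation on paths, yielding $\zeta_{\sigma(n)}(t+1)=\zeta_n(t)+\nu$ for some permutation $\sigma$; this is precisely the content of the conjecture and the natural generalization of (\ref{3b}).

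Two minor points worth tightening. First, the specific relabelling $n\mapsto n+\mu$ is a choice of labelling rather than a consequence; what the argument actually yields is \emph{some} permutation of the $d$ paths, which already suffices for the conjecture as stated. Second, the alternative route you sketch at the end --- conjugating ${\cal D}(\alpha,\beta)$ to ${\cal X}$ via $SL(2,{\mathbb Z}(d))$ --- only works when $\gcd(\alpha,\beta,d)=1$, whereas your direct magnetic-translation argument needs no such restriction; I would keep the direct argument as the main line.
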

This conjecture is supported with the numerical result in Fig\ref{f7},
where we plot the paths of the zeros of ${\mathfrak G}(z,t)$ which represents the state $[{\cal D}(1,1)]^t\ket{g}$.
The state $\ket{g}$ is defined through the zeros at $t=0$, which are
\begin{alignat}{1}\label{z100}
&\zeta_0(0)=1.4-2.01i;\;
\zeta_1(0)=2.15+2.32i\nonumber\\&
\zeta_2(0)=-1.39+1.86i.
\end{alignat}

\section{Discussion}
We have considered quantum systems with positions and momenta in ${\mathbb Z}(d)$.
An analytic representation on a torus that uses Theta functions, which describes these systems, has been given in Eq.(\ref{aaa1}).
The $d$ zeros of these analytic functions define uniquely the state of the system.
As the system evolves in time the zeros follow $d$ paths on the torus.

A semi-analytic method for the calculation of these paths of the zeros, has been given in Eqs.(\ref{246}),(\ref{num}).  
It has been used for the study of the paths of periodic systems.
Each path is characterized by the multiplicity M, and by a pair of winding numbers $(w_1,w_2)$.
Other phenomena like the joining of two paths of zeros into a single path, have also been studied.
The case that the time evolution operator is the displacement operator to a real power $t$, has also been studied (section \ref{S4}).
In this case the paths of the zeros are identical, but shifted with respect to each other.

There are deep links between the zeros of analytic functions and the behaviour of quantum systems.
For systems with finite dimensional Hilbert space, the zeros determine the state of the system,
and the time evolution can be described with $d$ classical paths on a torus.
The ultimate goal is to develop the full quantum formalism in terms of the zeros, and to derive general laws that describe their motion.
For example, it is interesting to study what determines the velocity and acceleration of the zeros.
Analytical relations between the zeros and the various quantum quantities, would be ideal for this purpose. 
The semi-analytical method proposed in this paper, is a positive step in this direction.

Other related problems, like the behaviour of the zeros in the semiclassical limit, could also be studied in extensions of the present work. 

\begin{figure}[h]
\centering
\includegraphics[width=0.5\textwidth]{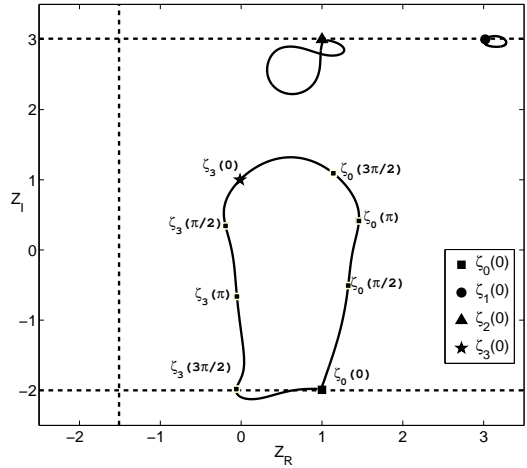}
\caption{Paths of the zeros for the Hamiltonian of Eq.(\ref{e1}). 
The period is $T=2\pi$.
At $t=0$ the zeros are given in Eq.(\ref{z11}).
Dotted lines show a cell, which is a square with each side equal to $5.01$}.
\label{f1}
\end{figure}
\begin{figure}[h]
\centering
\includegraphics[width=0.5\textwidth]{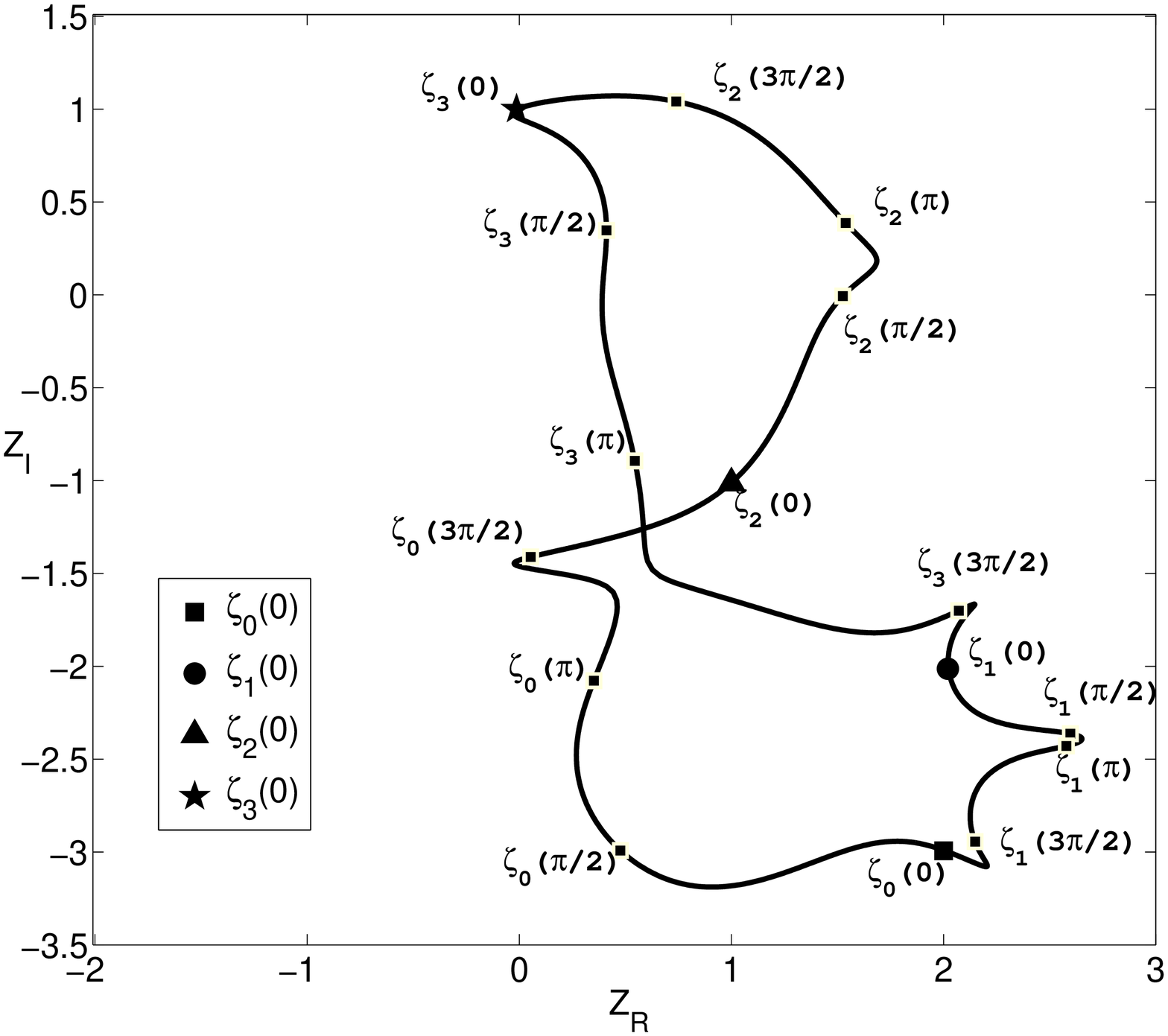}
\caption{Paths of the zeros for the Hamiltonian of Eq.(\ref{e1}). 
The period is $T=2\pi$.
The zeros at $t=0$ are given in Eq.(\ref{z2b}).The cell is a square with each side equal to $5.01$}
\label{f2}
\end{figure}
\begin{figure}[h]
\centering
\includegraphics[width=0.5\textwidth]{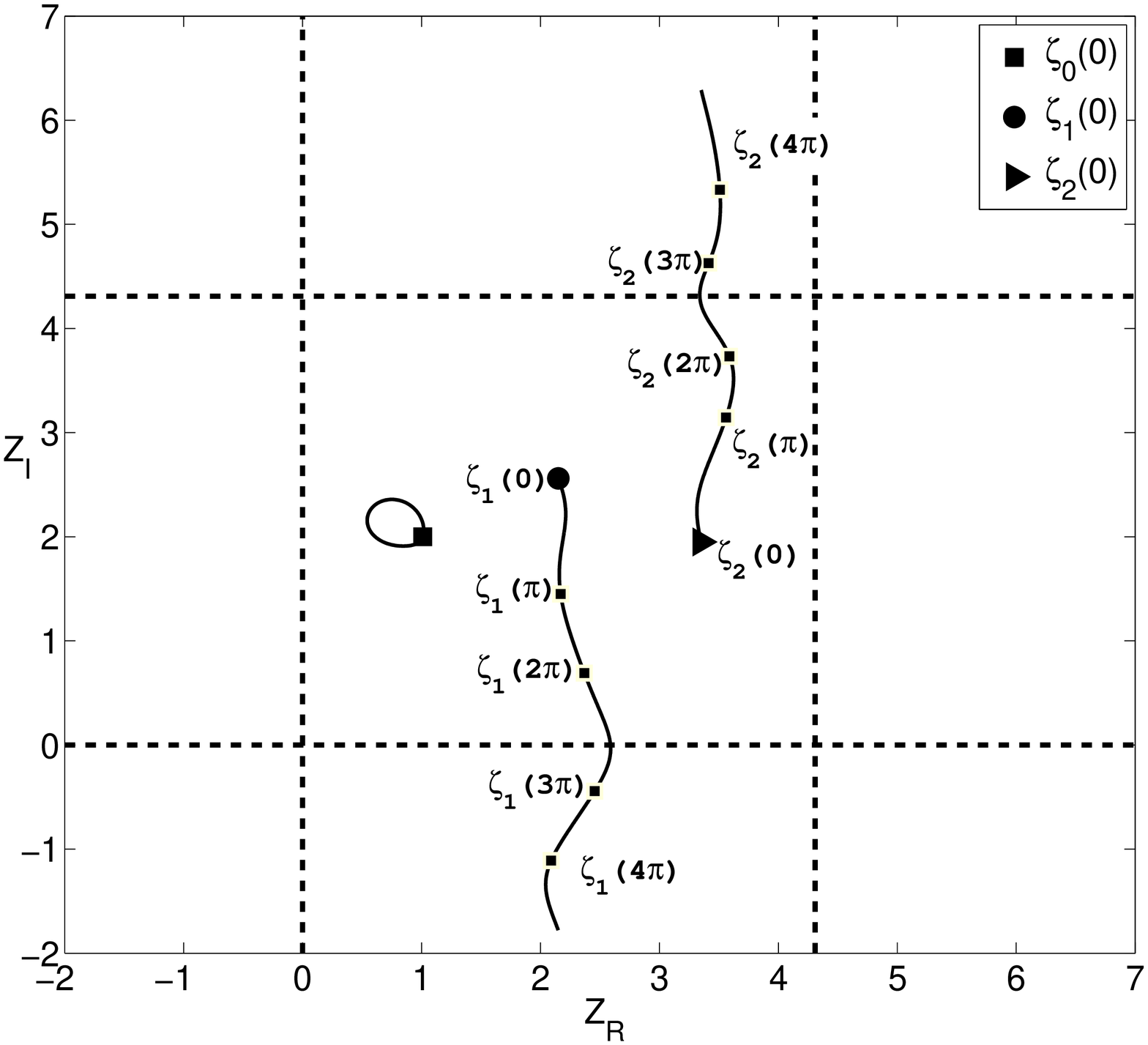}
\caption{Paths of the zeros for the  Hamiltonian of Eq.(\ref{eq3}). 
The period is $T=5\pi$.
The zeros at $t=0$ are given in Eq.(\ref{z4}).
Dotted lines show a cell, which is a square with each side equal to $4.34$}
\label{f3}
\end{figure}
\begin{figure}[h]
\centering
\includegraphics[width=0.5\textwidth]{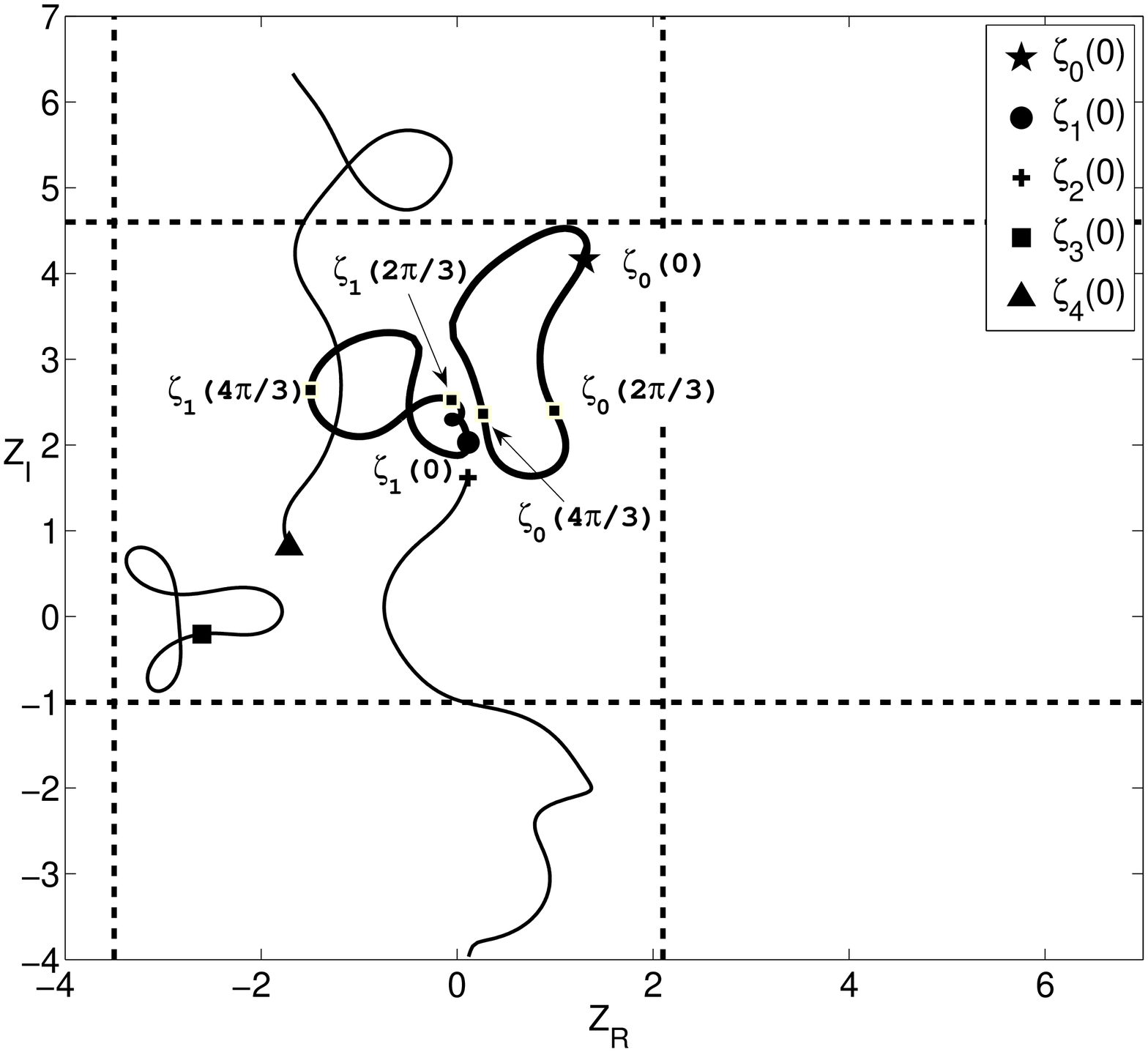}
\caption{Paths of the zeros for the Hamiltonian of Eq.(\ref{eg}). 
The period is $T=2\pi$.
At $t=0$ the zeros are given in Eq.(\ref{za1}).
Dotted lines show a cell, which is a square with each side equal to $5.6$}
\label{f4}
\end{figure}
\begin{figure}[h]
\centering
\includegraphics[width=0.5\textwidth]{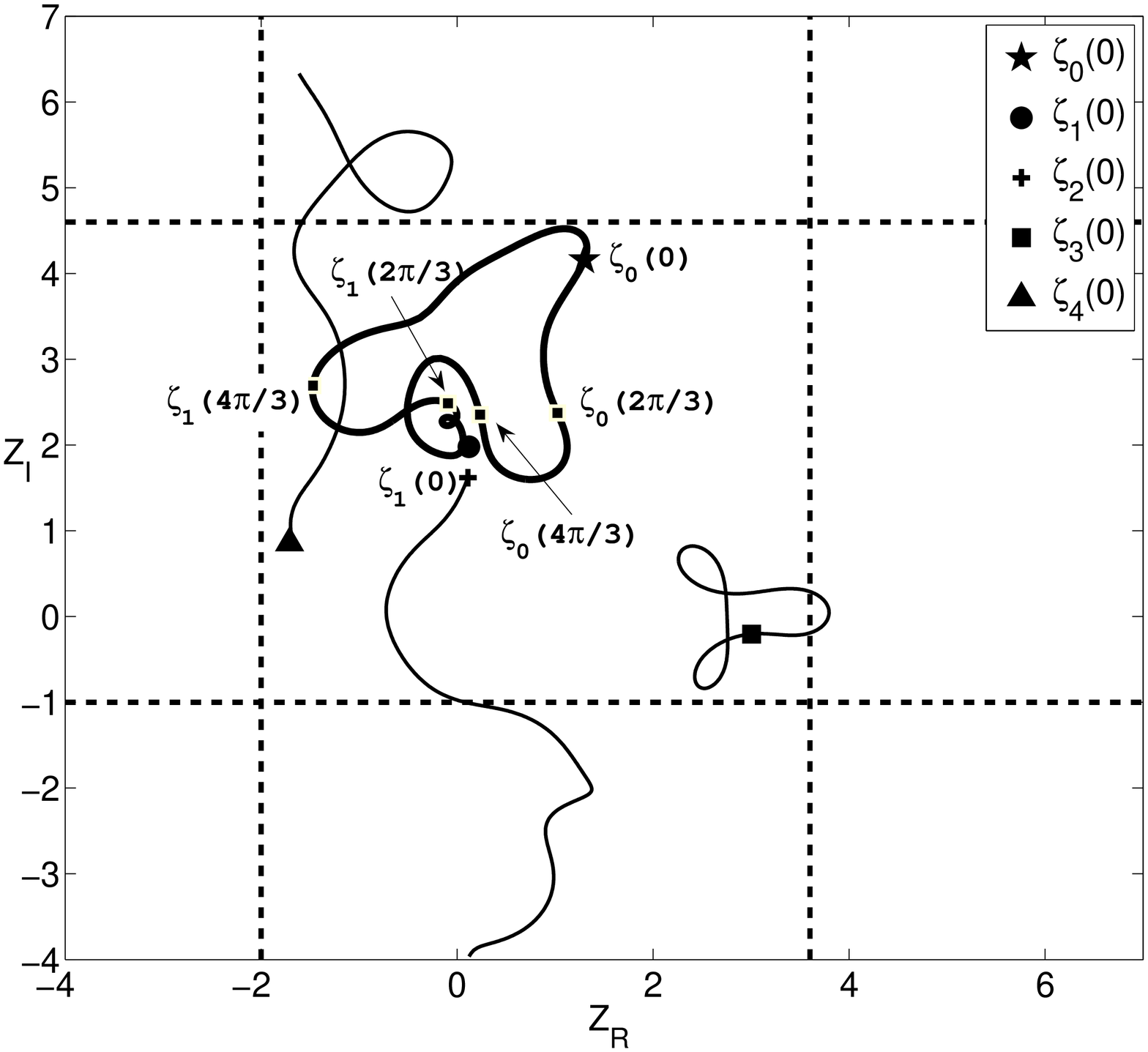}
\caption{Paths of the zeros for the Hamiltonian of Eq.(\ref{eg}). 
The period is $T=2\pi$.
At $t=0$ the zeros are given in Eq.(\ref{za2}).
Dotted lines show a cell, which is a square with each side equal to $5.6$}
\label{f5}
\end{figure}
\begin{figure}[h]
\centering
\includegraphics[width=0.5\textwidth]{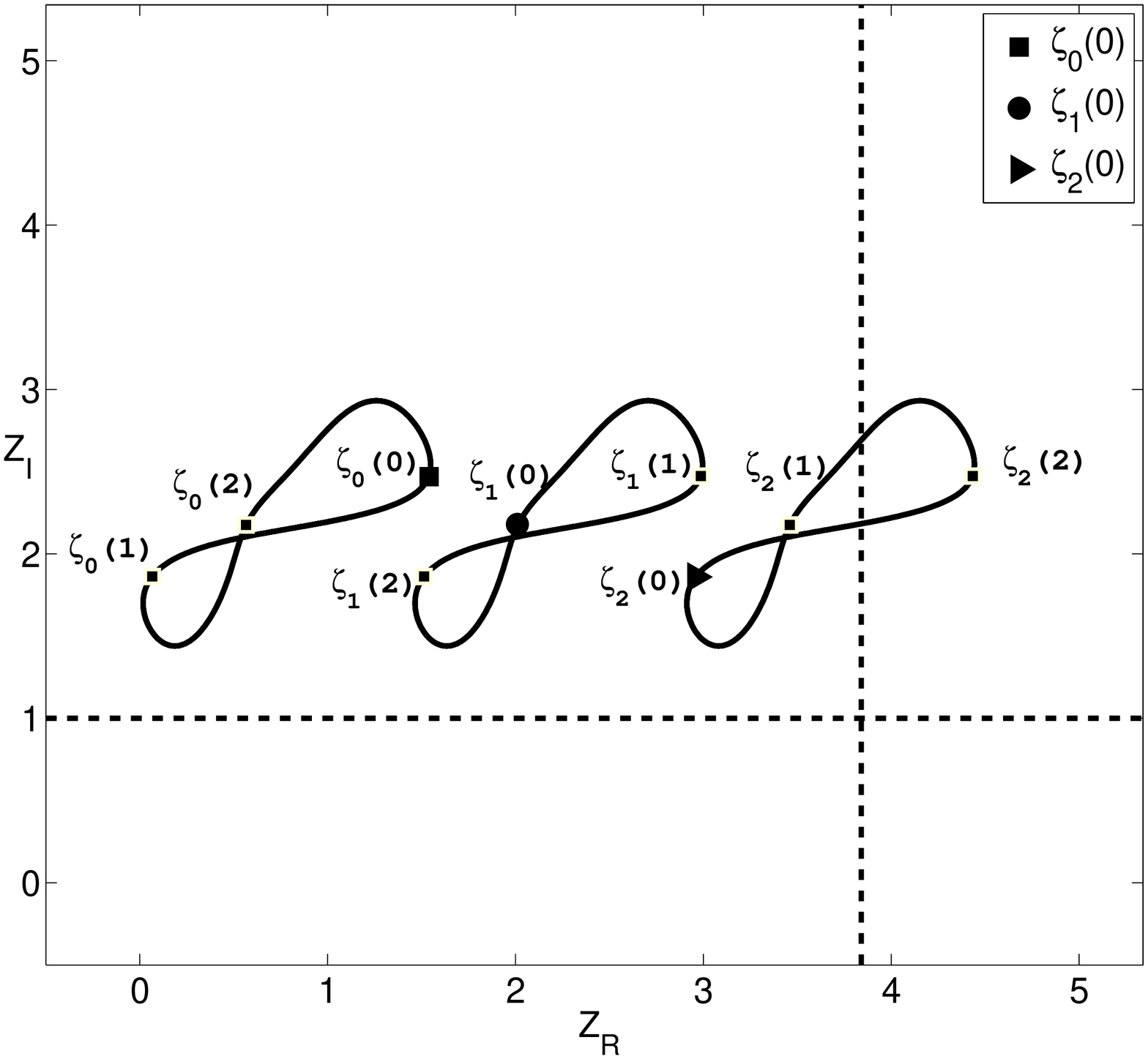}
\caption{Paths of the zeros of the analytic representation of the state ${\cal X}^t\ket{g}$. 
The period is $T=3$.
The state $\ket{g}$ is defined through the zeros at $t=0$ given in Eq.(\ref{z7}).
Dotted lines show a cell, which is a square with each side equal to $4.34$}
\label{f6}
\end{figure}
\begin{figure}[h]
\centering
\includegraphics[width=0.5\textwidth]{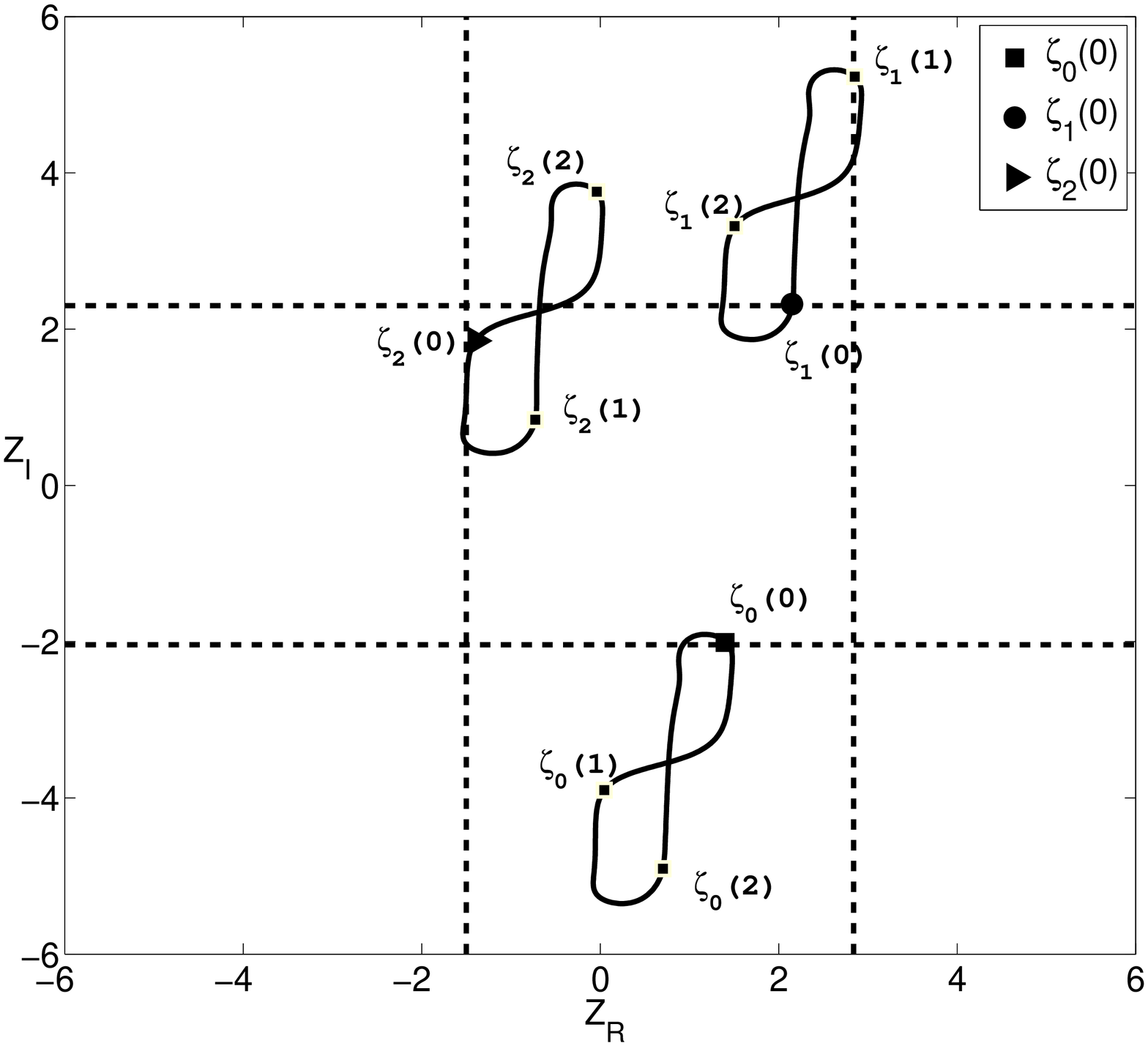}
\caption{Paths of the zeros of the analytic representation of the state $[{\cal D}(1,1 )]^t\ket{g}$.
The period is $T=3$.
The state $\ket{g}$ is defined through the zeros at $t=0$ given in Eq.(\ref{z100}).
Dotted lines show a cell, which is a square with each side equal to $4.34$}
\label{f7}
\end{figure}

\begin{thebibliography}{90}
\bibitem{B1}
V. Bargmann, Commun. Pure Appl. Math. {\bf 14}, 187 (1961)
\bibitem{B2}
V. Bargmann, Commun. Pure Appl. Math. {\bf 20}, 1 (1967)
\bibitem{L1}
P. Leboeuf, A. Voros, J. Phys. A23, 1765 (1990)
\bibitem{L2}
P. Leboeuf, J. Phys. A24, 4575 (1991)
\bibitem{L3}
M.B. Cibils, Y. Cuche, P. Leboeuf, W.F. Wreszinski, Phys. Rev A46, 4560 (1992)
\bibitem{L4}
S. Nonnenmacher, A. Voros, J. Phys. A30, 295 (1997)
\bibitem{L5}
H.J. Korsch, C. M\'uller, H. Wiescher, J. Phys. A30, L677 (1997)
\bibitem{L6}
F. Toscano, A.M.O. de Almeida, J. Phys. A32, 6321 (1999) 
\bibitem{L7}
D. Biswas, S. Sinha, Phys. Rev. E60, 408 (1999)
\bibitem{L8}
A. Nonnenmacher, A. Voros, J. Stat. Phys. 92, 431 (1998)
\bibitem{A1}
B.A. Dubrovin, S.P. Novikov, JETP 52, 511 (1980) 
\bibitem{A2}
F.D.M. Haldane, E.H. Rezayi, Phys. rev. B31, 2529, (1985)
\bibitem{A3}
F. Wilczek, `Fractional Statistics and Anyon Superconductivity' (World scientific, Singapore, 1990)
\bibitem{P}
A. Perelomov, `Generalized coherent states and their applications', (Springer, Berlin, 1986) 
\bibitem{H}
B.C. Hall, Contemp. Math. 260, 1 (2000)
\bibitem{V}
A. Vourdas, J. Phys. A39, R65 (2006) 
\bibitem{F1}
J. Schwinger, `Quantum Kinematics and Dynamics' (New York, Benjamin, 1970)
\bibitem{F2}
P. Stovicek, J. Tolar, Rep. Math. Phys. 20, 157 (1984)
\bibitem{F3}
A. Vourdas, Rep. Prog. Phys. 67, 1 (2004)
\bibitem{F4}
M. Kibler, J. Phys. A42, 353001 (2009)
\bibitem{ZV}
S. Zhang, A. Vourdas, J. Phys. A37, 8349 (2004); and corrigendum in J. Phys. A38, 1197 (2005) 
\bibitem{TVZ}
M. Tubani, A. Vourdas, S. Zhang, Phys. Scr. 82, 038107 (2010)
\bibitem{TVZ1}
M. Tabuni, World Acad. Sc. Eng. Tech. Intern. J. Math.  7, 781 (2013)
\bibitem{TVZ2}
M. Tubani, PhD thesis, University of Bradford (2010) 
\bibitem{ELV}
P.Evangelides, C. Lei, A. Vourdas, J. Math.Phys. 56, 072108 (2015) 1-16
\bibitem{Th1}
A. Tyurin, `Quantization, classical and quantum field theory and theta functions' (American Math. Society, Rhode Island, 2003)
\bibitem{Th2}
M. Ruzzi, J. Math. Phys. 47, 063507 (2006)
\bibitem{T1}
D Mumford, `Tata lectures on Theta', Vols 1,2,3 (Birkhauser, Boston, 1983)
\end{thebibliography}
\end{document}